\newcommand{\naturalnumber}{\ensuremath{{\mathbb{N}}}}
\newcommand{\naturalnumberpositive}{\ensuremath{{\mathbb{N}^+}}}
\newcommand{\p}{\ensuremath{{\rm P}}}
\newcommand{\np}{\ensuremath{{\rm NP}}}
\newcommand{\conp}{\ensuremath{{\rm coNP}}}
\newcommand{\ntime}{\ensuremath{{\rm NTIME}}}
\newcommand{\contime}{\ensuremath{\rm coNTIME}}
\title{A Critique of Lin's ``On $\np$ versus $\conp$ and Frege Systems''\thanks{Supported in part by NSF grant CCF-2006496}}
\author{Nicholas~DeJesse}
\author{Spencer~Lyudovyk}
\author{Dhruv~Pai}
\author{Michael~Reidy}
\affil{Department of Computer Science\\University of Rochester\\Rochester, NY 14627, USA}
\newtheorem{example}{Example}
\newtheorem{theorem}{Theorem}
\newtheorem{definition}{Definition}
\newtheorem{lemma}[theorem]{Lemma}
\date{May 8, 2025}
\begin{document}\sloppy

\maketitle

\begin{abstract}
In this paper, we examine Lin's ``On $\np$ versus $\conp$ and Frege Systems''~\cite{lin2024np}. Lin claims to prove that $\np~\neq~\conp$ by constructing a language $L_d$ such that $L_d \in \np$ but $L_d \notin \conp$. We present a flaw in Lin's construction of $D$ (a nondeterministic Turing machine that supposedly recognizes $L_d$ in polynomial time). We also provide a proof that $L_d \not\in \np$. In doing so, we demonstrate that Lin's claim that $\np\neq\conp$ is not established by his paper. In addition, we note that a number of further results that Lin claims are not validly established by his paper.
\end{abstract}

\section{Introduction}

This critique examines Lin's ``On $\np$ versus $\conp$ and Frege Systems''~\cite{lin2024np}, which claims to resolve the $\np$ versus $\conp$ problem by providing a language $L_d$ that is not in $\conp$ but is recognized by some nondeterministic Turing machine. Lin further elaborates on the implications of $\np \neq \conp$ by constructing results related to relativization techniques, intermediary languages, and Frege systems.

The $\np$ versus $\conp$ problem is incredibly important in the field of computational complexity. By proving that the two classes are equal, one would collapse the polynomial hierarchy to its second level. Conversely, if the two classes are not equal, then this would imply that $\p \neq \np$ since \mbox{P = coP} \cite[p.~322]{sip:b:introduction-third-edition}. Furthermore, $\np=\conp$ if and only if there exists a propositional proof system $\mathcal{P}$ (as defined by Cook and Reckhow in \cite{coo-rec:j:proof-systems}) such that all tautologies have polynomial-length proofs in $\mathcal{P}$. Thus, a resolution to this problem would have major consequences for the field.

In this paper, we argue that the reasoning Lin uses to prove his Theorem~$1$~\cite{lin2024np}, which states that $\np \neq \conp$, is unsound. Specifically, we present a major flaw in Lin's technique of proving that $L_d\in\np$, and we further prove that, in fact, $L_d \not\in$~NP\@. As a consequence, we show that the author's subsequent claims have not been established, as the author's proofs of these claims rely on Theorem~$1$.

\section{Preliminaries}
Let $\ntime[t(n)]$ denote the set of languages decidable by some nondeterministic Turing machine $M$ in time $O(t(n))$, as described in Sipser's textbook \cite{sip:b:introduction-third-edition}. Additionally, we assume the reader is familiar with Turing machines, nondeterminism, the complexity classes $\np$ and $\conp$, and big~O notation. For more information on any of these topics, readers can consult any standard textbook, e.g., \cite{aro-bar:computational-complexity, hop-ull:b:automata, sip:b:introduction-third-edition}.

Note that we are critiquing Version~$13$ of Lin's paper. We see that the paper is being frequently updated. However, our arguments are aimed at Lin's main claims, which have remained constant across all versions.

\subsection{The ``For All" Accepting Criterion and coNP}

In his paper, Lin defines the ``for all" accepting criterion for nondeterministic Turing machines as follows.
\begin{definition}[{\cite[Definition 2.4]{lin2024np}}] Let $M$ be a nondeterministic Turing machine. $M$ accepts the input $w$ if and only if all computation paths of $M(w)$ accept.
\end{definition}
This of course aims to capture ``conondeterministic" complexity, classes, and machines. Indeed, Lin provides the following definitions for $\contime$ and $\conp$ based on the ``for all" accepting criterion, which we have summarized for clarity.
\begin{definition}[{\cite[Definition 2.5]{lin2024np}}] Let $\contime [\mathit{t(n)}]$ denote the set of languages decidable by some nondeterministic Turing machine $M$ in time $O(t(n))$ using the ``for all" accepting criterion. Thus $\conp = \bigcup_{k \in \naturalnumberpositive} \contime [\mathit{n^k}]$.
\end{definition}
This definition of $\conp$ is less common than the ``for all" rejecting criterion,\footnote{The ``for all" rejecting criterion defines a $\conp$ machine as accepting an input if and only if every computation branch of the simulation rejects \cite[p.~56]{aro-bar:computational-complexity}.} but it is still valid since under this definition $\conp$ is still the set of all languages whose complement is in NP\@. More specifically, if a language $A$ is in $\conp$ under the ``for all" accepting criterion, then its complement $\overline{A}$ consists of all strings that have at least one branch of their computation reject when simulated on the Turing machine $M$ such that $L(M)=$~A\@. Therefore, flipping the accepting and rejecting states of $M$ causes it to accept a string only if at least one branch of its computation accepts, which is the definition of NP\@. This means $\overline{A}\in$~NP\@. Since the ``for all" accepting criterion is valid, this is the definition of $\conp$ we will use for the remainder of the paper.

\subsection{Universal Turing Machines}

In his paper, Lin uses a universal nondeterministic Turing machine to construct a machine which recognizes a language that is crucial to his main claim that $\np \neq$~coNP\@. 
A universal polynomial-time nondeterministic Turing machine $U$ takes in an encoding $\langle M, w\rangle$ of a polynomial-time nondeterministic Turing machine $M$ and string $w$ as an input and simulates $M$ on $w$. Thus, $U( \langle M, w\rangle )$ accepts if and only if $M(w)$ accepts. The encoding $\langle M, w\rangle$ is typically encoded over $\{ 0, 1 \} ^*$ using a standard encoding of the tuple $\langle M, w \rangle$~\cite{aho-hop-ull:b:design-analysis}. Furthermore, if $\langle M, w\rangle$ is not a syntactically correct encoding (and cannot be decoded), $U$ must handle this input accordingly.

\section{Analysis of Lin's Arguments}

\subsection{Introduction to Lin's Arguments}
\label{s:intro-argument}

Lin begins his paper by presenting a method for enumerating all machines that recognize languages in coNP\@. Lin first encodes each nondeterministic Turing machine into some string $t\in\{0,1\}^\ast$ using common practices~\cite{aho-hop-ull:b:design-analysis}. Then Lin uses an enumeration $e$ to enumerate over all strings $s\in\{0,1\}^\ast$. For each string $s$ enumerated in this way, $e$ attempts to decode $s$ into a nondeterministic Turing machine. Any string that cannot be decoded in this way is said to represent the trivial nondeterministic Turing machine with no transitions in its transition function.

In order to show that $\np\neq\conp$, Lin presents a language $L_d$ recognized by a universal five-tape polynomial-time nondeterministic Turing machine $D$ \cite[p.~17]{lin2024np}. $D$ takes as input a string $x$ with length $|x|$ and operates as described in the proof of Theorem~$4.1$ of \cite[p.~16]{lin2024np}:
\begin{enumerate}
    \item Let $M_1,~M_2,~\ldots$ be an enumeration of the encodings of all $\conp$ machines (as described previously), and let $M_i$ be the $i$th encoding that is enumerated this way. If $x \notin 1^\ast\langle M_i\rangle$, then $D$ rejects $x$. Otherwise, $x = 1^l\langle M_i\rangle$ for some $l\in\naturalnumber$, so $D$ decodes $\langle M_i\rangle$ and retrieves the following information from it:
    \begin{itemize}
        \item $t$ = the number of tape symbols used by $\langle M_i\rangle$.
        \item $s$ = the number of states in $\langle M_i\rangle$.
        \item $k$ = the time bound of $\langle M_i\rangle$. In other words, $\langle M_i\rangle$ runs in at most $n^k + k$ steps.
        \item $m$ = the shortest length of $\langle M_i\rangle$, which can be obtained by taking the input $x$ and removing the $l$ leading 1's.
    \end{itemize}
    $D$ uses its fifth tape as scratch paper to perform these calculations.
    \item On its second tape, $D$ creates $|x|$ blocks of $\lceil\log t\rceil$ cells, where each block is delineated by the marker symbol $\#$. This process creates $(1 + \lceil\log t \rceil)|x|$ cells in total. This tape is used to simulate the tape of $M_i$, and during this simulation, each block stores a binary-encoded version of the symbol in the corresponding cell of $M_i$'s tape. $D$ initializes this tape to store its input (encoded in binary as described above), encoding the binary representation for the blank cell in any unused blocks.
    \item On its third tape, $D$ creates a block of $\lceil (k+1)\log n\rceil$ cells, each initialized to $0$. $D$ uses this tape to count up to $n^{k+1}$, where $n = |x|$. Each transition of $M_i$ increments the counter stored on tape $3$, and if the tape ever overflows (i.e., if the count ever exceeds $n^{k+1}$), $D$ halts and rejects.
    \item On input $1^{n-m} \langle M_i \rangle$, $D$ computes an integer $j$ such that $f(j) < n \leq f(j+1)$, where $f(j+1)$ is defined recursively as
    \begin{equation*}
        f(j+1)=
        \begin{cases}
            2, &j=0\\
            2^{f(j)^k}, &j\geq 1.
        \end{cases}
    \end{equation*}
    Note that, by Lin's Remark~$4.1$ \cite[p.~15]{lin2024np}, this calculation can be done in polynomial time with respect to $|x|$. Then:
    \begin{enumerate}
        \item If $f(j) < n < f(j+1)$, then $D$ simulates $M_i$ on input $1^{n+1-m} \langle M_i\rangle$ using nondeterminism in $(n + 1)^{k} + k$ time and outputs the resulting answer.
        \item If $n = f(j+1)$, then $D$ simulates $M_i$ on input $1^{f(j)+1-m} \langle M_i\rangle$ and rejects $1^{n - m} \langle M_i\rangle$ if $M_i$ accepts $1^{f(j)+1-m} \langle M_i\rangle$ in $(f(j) + 1)^{k} + k$ time and accepts $1^{n - m} \langle M_i\rangle$ if $M_i$ rejects $1^{f(j)+1-m} \langle M_i\rangle$ in $(f(j) + 1)^{k} + k$ time.
    \end{enumerate}
\end{enumerate}

In its construction, $D$ uses a technique called lazy diagonalization (sometimes referred to as delayed diagonalization) to ensure that $D$ has enough time to simulate $M_i$ deterministically. The general idea behind this technique is to delay the diagonalization step for a sufficiently long, exponential number of steps until the machine can diagonalize against some part of its previous history (Ladner~\cite{lad:j:np-incomplete}, see also~\cite{hem-spa:j:team-diagonalization}). In this case, by the time $M_i$ is simulated deterministically in step $4b$, the input is already exponentially larger than the number of branches in $M_i$ because of step $4a$ \cite{lin2024np}. This means the deterministic simulation of $M_i$ can be performed in polynomial time with respect to the size of the input.

Lin then uses a proof by contradiction to claim that $L_d\notin$~coNP\@. Supposing that $L_d\in\conp$, Lin first shows that $D$ has enough time to simulate $M_i$, where $M_i$ is a $\conp$-machine such that $L(M_i) = L_d$, given some sufficiently long input $w$ containing the encoding of $M_i$ \cite{lin2024np}. Then, Lin demonstrates that
\begin{enumerate}
    \item [(1)] If $f(j) < n < f(j+1)$, then $D(1^{n-m}\langle M_i\rangle )=M_i(1^{n+1-m}\langle M_i\rangle )$ \cite[Eq.~4.1]{lin2024np}.
    \item [(2)] $D(1^{f(j+1)-m}\langle M_i\rangle )\neq M_i(1^{f(j)+1-m}\langle M_i\rangle )$ \cite[Eq.~4.2]{lin2024np}.
\end{enumerate}
which both follow easily from parts $4a$ and $4b$ of the definition of $D$. Specifically, part $4a$ of the definition of $D$ states that if $f(j) < n < f(j+1)$, then $D$ simulates $M_i$ on input $1^{n+1-m}\langle M_i\rangle$ and accepts if and only if $M_i$ accepts, meaning that the outputs of the machines $D$ and $M_i$ are equal on these inputs and Lin's Equation $4.1$ holds. Similarly, part $4b$ of the definition of $D$ states that if $n = f(j+1)$, then $D$ simulates $M_i$ on input $1^{f(j)+1-m}$ and outputs the opposite result, meaning that the outputs of the machines $D$ and $M_i$ are not equal on these inputs and Lin's Equation $4.2$ holds. Note that equality in this context means that the machines $D$ and $M_i$ have the same output (i.e. either both accept or both reject when given their respective inputs). Similarly, inequality means that $D$ and $M_i$ have different outputs (i.e., one accepts and the other rejects) on their respective inputs. Lin's Equation~$4.1$ yields
\begin{gather*}
    D(1^{f(j)+1-m}\langle M_i\rangle) = M_i(1^{f(j)+2-m}\langle M_i\rangle) \\
    D(1^{f(j)+2-m}\langle M_i\rangle) = M_i(1^{f(j)+3-m}\langle M_i\rangle) \\
    \vdots \\
    D(1^{f(j+1)-1-m}\langle M_i\rangle) = M_i(1^{f(j+1)-m}\langle M_i\rangle).
\end{gather*}

In all, this yields
\begin{align*}
    D(1^{f(j+1)-m}\langle M_i\rangle) = M_i(1^{f(j)+1-m}\langle M_i\rangle).
\end{align*}

However, this directly contradicts Lin's Equation~$4.2$~\cite{lin2024np}. Therefore, the assumption that $L_d\in\conp$ must be false, so $L_d\notin\conp$. We found no errors in Lin's enumeration over the $\conp$-machines, construction of $L_d$, or proof that $L_d\notin\conp$.

\subsection{Analysis of Lin's Theorem 5.3}
\label{s:main-claim}

In this section, we look at Lin's Theorem~$5.3$~\cite{lin2024np}, where the language $L_d^i$ is introduced as the set of strings that are accepted by the previously defined universal polynomial-time nondeterministic Turing machine $D$ within $O(n^i)$ steps for some $i \in \naturalnumberpositive$ \cite[p.~32]{lin2024np}.

The author then makes the following three claims:
\begin{enumerate}
    \item $L_d=\bigcup_{i\in\naturalnumberpositive}L_d^i$ {\cite[Eq.~5.1]{lin2024np}}
    \item For each $i \in \naturalnumberpositive$, $L_d^i \subseteq L_d^{i+1}$ {\cite[Eq.~5.2]{lin2024np}}
    \item $L_d^i \in \ntime[n^i] \subseteq \np$ for each $i\in \naturalnumberpositive$ {\cite[Eq.~5.3]{lin2024np}}
\end{enumerate}

Lin states that these three claims ``easily yield" $L_d \in \np$, resulting in his Theorem~$1$~\cite{lin2024np}.

\begin{theorem}[{\cite[Theorem 5.3]{lin2024np}}]
The language $L_d$ is in $\np$, where $L_d$ is accepted by the universal nondeterministic Turing machine $D$ that for some $k \in \naturalnumberpositive$ runs within time $O(n^k)$.
\end{theorem}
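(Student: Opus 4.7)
The plan is to combine Lin's three auxiliary claims mechanically. Because $D$ is promised to run in time $O(n^k)$ for some fixed $k \in \naturalnumberpositive$, every string $x \in L_d$ is accepted by $D$ along some branch in at most $c|x|^k$ steps for an appropriate constant $c$. Under claim~(1), this makes $L_d$ equal to the union $\bigcup_{i \in \naturalnumberpositive} L_d^i$, and by claim~(3) each $L_d^i \in \np$; the theorem therefore reduces to showing that for the particular fixed $k$, we actually have $L_d = L_d^k$. Once that is in hand, $L_d \in \ntime[n^k] \subseteq \np$ follows directly, with no need to take an unbounded union of $\np$ languages (which in general need not lie in $\np$).

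To verify the three claims in order: for claim~(1), I would argue that $x \in L_d$ iff $D$ accepts $x$ within $c|x|^k$ steps iff $x \in L_d^k$, which by claim~(2) places $x$ in the union, while conversely every $L_d^i$ is by construction a subset of $L(D) = L_d$. For claim~(2), note that for sufficiently large $n$ one has $n^i \le n^{i+1}$, so any accepting $D$-computation running in $O(n^i)$ steps also runs in $O(n^{i+1})$ steps; finitely many short inputs can be handled by table lookup. For claim~(3), I would build a machine $D_i$ that nondeterministically simulates $D$ on input $x$ while maintaining a clock on an auxiliary tape, halting and rejecting once the $cn^i$ budget is exhausted.

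The main obstacle I foresee lies in claim~(3). Building an NP machine whose branches mimic $D$'s branches is routine clock-and-simulate, \emph{provided} that $D$'s own transitions can be faithfully driven by existential nondeterminism. But inside $D$, step~$4$ requires reproducing the accept/reject behavior of a \conp-machine $M_i$ run under Lin's ``for all'' accepting criterion. In step~$4(a)$, $D$ must accept exactly when every branch of $M_i$ on the corresponding input accepts, which is precisely a universal-nondeterminism test; in step~$4(b)$ it must perform both this test and its negation. Implementing universal nondeterminism inside a machine that is then to be shown to lie in $\np$ is exactly the tension at the heart of $\np$ versus $\conp$, so any honest proof of claim~(3) must explain how $D$'s construction evades that tension. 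I expect this to be the decisive point on which the theorem stands or falls, and I would scrutinize Lin's simulation step closely before accepting the ``easily yield'' passage.
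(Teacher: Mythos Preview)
Your analysis is essentially correct and arrives at the same verdict as the paper---namely, that Lin's Theorem~5.3 is not established---but by a different and in one respect sharper route. The paper's primary objection to Lin's argument is that $\np$ is not closed under infinite union, so claims (1)--(3) do not combine to yield $L_d \in \np$. You instead observe that if the theorem's own hypothesis is taken seriously---that $D$ runs in $O(n^k)$ for some \emph{fixed} $k$---then $L_d = L_d^k$ outright and no infinite union is needed. This is a more charitable reading of the logical structure of Lin's statement, and it shows that the infinite-union complaint, while valid against Lin's written argument, is not where the deepest failure lies.

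Where you locate the real obstruction---in claim~(3), specifically in whether $D$'s step~4 can be driven by existential nondeterminism when it must reproduce the ``for all'' acceptance behavior of a $\conp$-machine $M_i$---is a genuine and decisive objection that the paper does not articulate in those terms. The paper instead bypasses the question of \emph{why} Lin's argument breaks at that level and proves directly that $L_d \notin \np$ (its Theorem~\ref{t:disprove-main-claim}): assuming $L_d \in \np$, one obtains a $\conp$-machine $M_i$ for $\overline{L_d}$, and the behavior of $D$ on inputs $1^{n-m}\langle M_i\rangle$ across the interval $f(j) < n \leq f(j+1)$ then forces a contradiction via parts~4a and~4b of $D$'s definition. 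So your proposal correctly diagnoses the mechanism of failure, while the paper supplies the formal disproof confirming that the theorem is actually false; the two analyses are complementary rather than competing.
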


Although the author does not provide reasoning as to why $L_d \in \np$ follows from the three claims above, we believe that the intended intuition is as follows: $L_d \in \np$ because one can construct $L_d$ as the union of all $L_d^i$ for $i~\in~\naturalnumberpositive$, since every string in the language $L_d$ is guaranteed to be accepted by $D$ in some finite number of steps \cite[Eq.~5.1]{lin2024np}. Note that Lin's Equation~5.2 yields $L_d^1 \subseteq L_d^2 \subseteq L_d^3\subseteq\ldots$ {\cite[Eq.~5.2]{lin2024np}}, meaning each $L_d^{i+1}$ is a language containing at least all the strings in $L_d^i$. Therefore, one can construct a nondeterministic Turing machine that accepts $L_d$ by simulating the universal nondeterministic Turing machine $D$ for $i$ steps to obtain $L_d^i$ for each $i \in \naturalnumberpositive$ and then taking the union of each of these languages to get $L_d$. Furthermore, since each string in $L_d^i$ can by definition be accepted by $D$ in $O(n^i)$ steps, $L_d^i \in \ntime[n^i] \subseteq \np$ \cite[Eq.~5.3]{lin2024np}, which implies that $L_d \in \np$ since the class $\np$ is closed under union, meaning if some languages $L_1, L_2 \in \np$, then $L_1 \cup L_2 \in \np$.

We can easily see that $\np$ is closed under finite union by considering some languages $L_1, L_2 \in \np$ (see \cite[p.~322]{sip:b:introduction-third-edition}). By definition, there must be some nondeterministic polynomial-time Turing machines $N_1$ and $N_2$ which decide $L_1$ and $L_2$, respectively. Then, we can construct a new machine $N_3$ which decides $L_1 \cup L_2$ as follows:

\paragraph{\textnormal{$N_3$ = ``On input $w$:}}
\begin{enumerate}
    \item Run $N_1$ on $w$.
    \item If $N_1$ accepts, then accept $w$. Otherwise, continue.
    \item Run $N_2$ on $w$.
    \item If $N_2$ accepts, then accept $w$. Otherwise, reject $w$."
\end{enumerate}

If $w \in L_1 \cup L_2$, then one of $N_1$ or $N_2$ will accept it, so $N_3$ will accept it. If $w \not\in L_1 \cup L_2$, then both $N_1$ and $N_2$ will reject $w$, so $N_3$ will reject it. Thus, $N_3$ accepts $w$ if and only if $w \in L_1 \cup L_2$.

Then, since $L_1$ and $L_2$ run in nondeterministic polynomial time, suppose $L_1$ runs in $O(n^i)$ and $L_2$ runs in $O(n^j)$ for some $i,j \in \naturalnumberpositive$\@. Then, step~$(1)$ of $N_3$ runs in $O(n^i)$, step~$(3)$ runs in $O(n^j)$, and steps~$(2)$~and~$(4)$ run in $O(1)$, so $N_3$ runs in $O(n^i) + O(1) + O(n^j) + O(1) = O(n^{\max(i, j)})$ nondeterministic time. Thus, $N_3$ runs in nondeterministic polynomial time, so $L_1 \cup L_2 \in \np$. It thus follows that $\np$ is closed under finite union.

However, Lin seems to mistakenly believe that $\np$ is necessarily closed under infinite unions, which would imply that the union of infinitely many $\np$ languages, as in \cite[Eq.~5.1]{lin2024np}, is in NP\@. Every language is an infinite union of 1-element $\np$ (specifically constant-time) languages, namely of each of its elements. Thus, if $\np$ is closed under infinite unions, then every language is in $\np$, which is false. Thus constructing $L_d$ as Lin describes does not necessarily result in $L_d \in \np$.

In fact, we can even prove that $L_{d} \notin \np$, thus directly contradicting Lin's claim that $L_{d} \in \np$. 
To do this, we first prove two lemmas about Lin's function $f$ (which is described in our Section~\ref{s:intro-argument}) that will be useful in showing $L_d \notin \np$.

\begin{lemma}
\label{l:positive-increasing-lemma}
    For all positive integers $j$, $f(j)$ and $f(j + 1)$ are even positive integers such that $f(j) < f(j + 1)$.
\end{lemma}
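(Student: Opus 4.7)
The plan is to prove Lemma~\ref{l:positive-increasing-lemma} by induction on $j \geq 1$, maintaining the inductive hypothesis that $f(j)$ is an even positive integer with $f(j) \geq 2$. Combined with the recursion, this will immediately yield the corresponding properties of $f(j+1)$ and the strict inequality.

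For the base case $j = 1$, the $j = 0$ branch of the recursion gives $f(1) = 2$, and then $f(2) = 2^{f(1)^k} = 2^{2^k}$. Since $k$ is a positive integer (being the exponent in the polynomial time bound $n^k + k$ of the machine $\langle M_i\rangle$), we have $2^k \geq 2$ and hence $f(2) \geq 4$. Both values are even positive integers, and $f(1) < f(2)$. For the inductive step, assume $f(j) \geq 2$ is an even positive integer. Then $f(j+1) = 2^{f(j)^k}$, and since $f(j)^k$ is a positive integer (in fact at least $2$), $f(j+1)$ is a positive power of $2$, hence an even positive integer that is at least $4$. For the strict inequality, I would chain $f(j+1) = 2^{f(j)^k} \geq 2^{f(j)}$ (valid because $k \geq 1$ and $f(j) \geq 1$ imply $f(j)^k \geq f(j)$) with the elementary fact that $2^n > n$ for every positive integer $n$, which can itself be justified by a one-line induction if desired.

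I do not foresee any serious obstacles. The only care needed is in recognizing that the base-case formula $f(j+1) = 2$ only applies when $j = 0$, so the exponential branch $f(j+1) = 2^{f(j)^k}$ is precisely what governs the values covered by the lemma (i.e., all $j \geq 1$), and in noting that $k \geq 1$ is essential: without it, $k = 0$ would collapse the recursion to a constant sequence and the conclusion would fail.
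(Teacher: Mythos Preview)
Your proposal is correct and follows essentially the same inductive approach as the paper. The only minor variation is in the inequality step: the paper compares $f(n)=2^{f(n-1)^k}$ with $f(n+1)=2^{f(n)^k}$ using the inductive hypothesis $f(n-1)<f(n)$ and monotonicity, whereas you bypass that comparison via the direct bound $f(j+1)=2^{f(j)^k}\geq 2^{f(j)}>f(j)$, which is a clean alternative that does not even require the strict-inequality part of the inductive hypothesis.
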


\begin{lemma}
\label{l:diff-by-2-lemma}
    For all integers $j \geq 1$, $f(j) < f(j+1)-2$.
\end{lemma}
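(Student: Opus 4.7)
The plan is to exploit the doubly-exponential growth of $f$ against the mere additive gap of $2$. Since $f(j)$ and $f(j+1)$ are both even positive integers by Lemma~\ref{l:positive-increasing-lemma}, the strict inequality $f(j) < f(j+1) - 2$ is equivalent to $f(j+1) \geq f(j) + 4$; and because $f(j+1) = 2^{f(j)^k}$ for $j \geq 1$, the task reduces to showing $2^{f(j)^k} \geq f(j) + 4$.

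I would split on $j$. For $j \geq 2$, Lemma~\ref{l:positive-increasing-lemma} together with the recursion gives $f(j) \geq f(2) = 2^{2^k} \geq 4$. Since $k \geq 1$ and $f(j) \geq 4$, we have $f(j)^k \geq f(j)$, hence $f(j+1) \geq 2^{f(j)}$. It then suffices to verify the elementary fact $2^n \geq n + 4$ for every integer $n \geq 4$, which is a quick sub-induction with base case $2^4 = 16 \geq 8$ and inductive step using $2^{n+1} = 2 \cdot 2^n \geq 2(n+4) \geq (n+1) + 4$. Applying this with $n = f(j)$ closes out all $j \geq 2$.

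The base case $j = 1$ collapses to $f(2) = 2^{2^k} \geq 6$, i.e., $2^k \geq 3$, i.e., $k \geq 2$. This is the step I expect to be the main obstacle: Lin's enumeration of $\conp$-machines appears to admit linear-time machines ($k = 1$), and in that case $f(2) = 4$, so $f(1) = 2 = f(2) - 2$ and the strict inequality fails at $j = 1$. I would either argue that the subsequent $L_d \notin \np$ argument only requires the lemma for $j$ sufficiently large (so that one can simply start the induction at $j = 2$), or observe that Lin's construction implicitly enforces $k \geq 2$ in the relevant setting. Modulo this edge case, the two cases above establish $f(j) < f(j+1) - 2$ for every $j$ in scope.
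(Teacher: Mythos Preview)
Your argument for $j \geq 2$ is correct and cleaner than the paper's. The paper also proves only the range $j \geq 2$: its induction explicitly takes $j = 2$ as the base case and never returns to $j = 1$, so your observation that the lemma as stated fails at $j = 1$ when $k = 1$ (since then $f(2) = 4 = f(1) + 2$) is accurate and matches what the paper actually establishes. Your remark that the downstream application in the $L_d \notin \np$ argument invokes the lemma only for $j > 1$ is also exactly how the paper handles it.

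The two routes to the $j \geq 2$ case differ in their intermediate bound. You go directly: from $f(j) \geq 4$ and $k \geq 1$ you get $f(j)^k \geq f(j)$, hence $f(j+1) \geq 2^{f(j)} \geq f(j) + 4$ via the elementary sub-induction $2^n \geq n+4$ for $n \geq 4$. The paper instead first shows $f(j+1) > 2 f(j)$ by writing
\[
f(j+1) = 2^{f(j)^k} > 2^{(1+f(j-1))^k} \geq 2^{1 + f(j-1)^k} = 2 \cdot f(j),
\]
using that $f(j) \geq f(j-1) + 2$ (from the previous lemma, both values being even) and a binomial-type lower bound, and then combines $f(j+1) > 2 f(j)$ with $f(j) \geq 4$ to get $f(j+1) > f(j) + 2$. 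Your version avoids the detour through $f(j-1)$ and the binomial step, needs no appeal to the inductive hypothesis in the inductive step, and yields the slightly sharper $f(j+1) \geq f(j) + 4$ directly; the paper's version has the minor advantage of isolating the doubling $f(j+1) > 2 f(j)$, which could be reused elsewhere. Either way, the content is the same.
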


The proofs of these lemmas are deferred to Appendix~\ref{appendix:extra}. Using these two lemmas, we show that Lin's claim of $L_d\in\np$ does not hold.

\begin{theorem}
\label{t:disprove-main-claim}
    $L_{d} \notin \np$.
\end{theorem}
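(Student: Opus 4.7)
The plan is to argue by contradiction. Assuming $L_d \in \np$ gives $\overline{L_d} \in \conp$, so Lin's enumeration of $\conp$-machines contains some $M_{i_0}$ with $L(M_{i_0}) = \overline{L_d}$. Let $m_0 = |\langle M_{i_0}\rangle|$ and abbreviate $x_n = 1^{n-m_0}\langle M_{i_0}\rangle$, so $|x_n| = n$. Fix any $j$ large enough that $f(j) > m_0$ and that $D$'s specification from Section~\ref{s:intro-argument} applies on every $x_n$ with $f(j) < n \le f(j+1)$.

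The main calculation chains step~$4a$ of $D$ against the identity $L(M_{i_0}) = \overline{L_d}$. For $f(j) < n < f(j+1)$, step~$4a$ makes $D(x_n)$ equal to the ``for all'' answer of $M_{i_0}$ on $x_{n+1}$, giving
\[
x_n \in L_d \Leftrightarrow M_{i_0}\text{ accepts }x_{n+1} \Leftrightarrow x_{n+1} \in \overline{L_d} \Leftrightarrow x_{n+1} \notin L_d.
\]
Iterating this flip from $n = f(j)+1$ up through $n = f(j+1)-1$---a nonempty range by Lemma~\ref{l:diff-by-2-lemma}---links $x_{f(j)+1}$ and $x_{f(j+1)}$ by $f(j+1) - f(j) - 1$ negations. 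Because Lemma~\ref{l:positive-increasing-lemma} forces $f(j+1) - f(j)$ to be even, the chain length is odd, so the chain collapses to $x_{f(j)+1} \in L_d \Leftrightarrow x_{f(j+1)} \notin L_d$. Next, step~$4b$ at $n = f(j+1)$ flips $M_{i_0}$'s answer on $x_{f(j)+1}$, yielding
\[
x_{f(j+1)} \in L_d \Leftrightarrow M_{i_0}\text{ rejects }x_{f(j)+1} \Leftrightarrow x_{f(j)+1} \in L_d,
\]
and combining the two equivalences forces the contradiction $x_{f(j)+1} \in L_d \Leftrightarrow x_{f(j)+1} \notin L_d$.

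The main obstacle is routine bookkeeping: one must verify that every sufficiently large $j$ makes Lin's specification of $D$ actually applicable on each input $x_n$ with $n \in (f(j), f(j+1)]$---that $n - m_0 \ge 0$, that step~$1$ of $D$ correctly parses $x_n$ and recovers the parameters of $M_{i_0}$, and that the time and tape bounds imposed in steps~$2$--$4$ are respected. These conditions depend only on constants determined by $\langle M_{i_0}\rangle$, so any $j$ with $f(j)$ exceeding a fixed threshold works, and the parity argument above then yields the contradiction.
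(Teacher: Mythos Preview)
Your proposal is correct and follows essentially the same approach as the paper's proof: assume $L_d\in\np$, pick a $\conp$-machine $M_{i_0}$ for $\overline{L_d}$, use step~$4a$ together with $L(M_{i_0})=\overline{L(D)}$ to get a flip $x_n\in L_d\Leftrightarrow x_{n+1}\notin L_d$ on the interval $(f(j),f(j+1))$, exploit the evenness of $f(j+1)-f(j)$ from Lemma~\ref{l:positive-increasing-lemma} (and the nonemptiness from Lemma~\ref{l:diff-by-2-lemma}) to make the chain collapse with the right sign, and then derive a contradiction against step~$4b$. The only cosmetic difference is that the paper groups consecutive flips into equalities $D(x_n)=D(x_{n+2})$ and chains those along the odd indices, whereas you count the parity of negations directly; the underlying argument is identical.
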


\begin{proof}
    Suppose for the sake of contradiction that $L_{d} \in \np$. This means that $\overline{L_{d}} \in \conp$, so there exists some $\conp$-machine $M_{i}$ with encoding $1^\ast\langle M_i \rangle$ in Lin's enumeration $e$ (see our Section~\ref{s:intro-argument}) such that $\langle M_i\rangle$ is the encoding of $M_i$ and $L(M_{i}) = \overline{L_{d}}$. Let $m = |\langle M_i\rangle|$. Since $m$ is a fixed nonnegative integer, it follows that there exists some integer $j>1$ such that $f(j) \geq m$.
    
    Since $M_{i}$ is a $\conp$-machine, it follows by part $4a$ of the definition of $D$ that for all nonnegative integers $n$ such that $f(j) < n < f(j + 1)$, $D$ accepts $1^{n - m} \langle M_i \rangle$ if and only if $M_{i}$ accepts $1^{n + 1 - m} \langle M_i \rangle$. Next, since $L(M_{i}) = \overline{L_{d}}$ and $L(D) = L_{d}$ (note that this means $L(M_{i}) = \overline{L(D)}$), then for all nonnegative integers $n$ such that $f(j) < n < f(j + 1)$, $M_i$ accepts $1^{n+1-m}\langle M_i \rangle$ if and only if $D$ rejects $1^{n+1-m}\langle M_i \rangle$. Thus, it follows that $D$ accepts $1^{n - m} \langle M_i \rangle$ if and only if $D$ rejects $1^{n + 1 - m} \langle M_i \rangle$ for all $n$ on the same interval. Therefore, for all nonnegative integers $n$ such that $f(j) < n < f(j + 1) - 1$, $D$ accepts $1^{n - m} \langle M_i \rangle$ if and only if $D$ accepts $1^{n + 2 - m} \langle M_i \rangle$, meaning $D(1^{n - m} \langle M_i \rangle) = D(1^{n + 2 - m} \langle M_i \rangle)$.
    
    Now, since $j > 1$, then by Lemma~\ref{l:diff-by-2-lemma}, we know $f(j) < f(j+1)-2$. Thus, since $f(j)$ is even by Lemma~\ref{l:positive-increasing-lemma}, then $f(j)+1$ is odd, and $f(j) < f(j)+1 < f(j+1)-2+1 = f(j+1)-1$. This means that $f(j)+1$ is odd, and $f(j) < f(j)+1 < f(j+1)-2+1 = f(j+1)-1$. Therefore, $f(j)+1$ is an odd integer between $f(j)$ and $f(j+1)-1$, exclusive. Thus, there exists some $n$ such that $f(j) < n < f(j+1)-1$. It follows that
    \begin{gather*}
        D(1^{f(j) + 1 - m} \langle M_i \rangle) = D(1^{f(j) + 3 - m} \langle M_i \rangle)\\
        D(1^{f(j) + 3 - m} \langle M_i \rangle) = D(1^{f(j) + 5 - m} \langle M_i \rangle)\\
        \vdots\\
        D(1^{f(j + 1) - 5 - m} \langle M_i \rangle) = D(1^{f(j + 1) - 3 - m} \langle M_i \rangle)\\
        D(1^{f(j + 1) - 3 - m} \langle M_i \rangle) = D(1^{f(j + 1) - 1 - m} \langle M_i \rangle).
    \end{gather*}

    In particular, since $f(j)$ and $f(j+1)$ are even and $f(j)+1$ is odd with $f(j)+1 < f(j+1)$, then there is some odd positive integer $k$ such that $f(j+1)-k = f(j)+1$. Then, as shown above, $D(1^{f(j+1)-1-m} \langle M_i \rangle) = D(1^{f(j+1)-3-m} \langle M_i \rangle) = \cdots = D(1^{f(j+1)-k-m} \langle M_i \rangle)$. Furthermore, since $f(j+1)-k = f(j)+1$, then $D(1^{f(j+1)-k-m} \langle M_i \rangle) = D(1^{f(j)+1-m} \langle M_i \rangle)$. It thus follows that $D(1^{f(j+1)-1-m} \langle M_i \rangle) = D(1^{f(j)+1-m} \langle M_i \rangle)$.
    
    Also, because $D$ accepts $1^{n - m} \langle M_i \rangle$ if and only if $D$ rejects $1^{n + 1 - m} \langle M_i \rangle$ for all nonnegative integers $n$ such that $f(j) < n < f(j + 1)$, we have

    \begin{equation*}
        D(1^{f(j + 1) - 1 - m} \langle M_i \rangle)\neq D(1^{f(j + 1) - m} \langle M_i \rangle).
    \end{equation*}
    
    When combined with the previous equations, this yields 

    \begin{equation*}
        D(1^{f(j) + 1 - m} \langle M_i \rangle) \neq D(1^{f(j + 1) - m} \langle M_i \rangle).
    \end{equation*}
    
    Next, by part $4b$ of the definition of $D$, $D$ accepts $1^{f(j + 1) - m} \langle M_i \rangle$ if and only if $M_i$ rejects $1^{f(j) + 1 - m} \langle M_i \rangle$. Then, since we established earlier that $L(M_{i}) = \overline{L(D)}$, it follows that $D$ accepts $1^{f(j) + 1 - m} \langle M_i \rangle$ if and only if $M_i$ rejects $1^{f(j) + 1 - m} \langle M_i \rangle$, which occurs if and only if $D$ accepts $1^{f(j+1) - m} \langle M_i \rangle$. Therefore, $D(1^{f(j) + 1 - m} \langle M_i \rangle) = D(1^{f(j + 1) - m} \langle M_i \rangle)$. However, this contradicts the fact that $D(1^{f(j) + 1 - m} \langle M_i \rangle) \neq D(1^{f(j + 1) - m} \langle M_i \rangle)$, as established earlier. Therefore, $L_{d} \notin \np$.
\end{proof}

\subsection{Analysis of Lin's Further Claims}

In this section, we discuss Sections $6$, $7$, and $8$ of Lin's paper, in which Lin claims to prove several results relating to his main theorems. However, since all of these results hinge on $\np \neq \conp$ via his Theorem~$1$, it is clear that they have not been resolved by Lin's paper~\cite{lin2024np}.

\subsubsection{Analysis of Lin's ``Breaking the Relativization Barrier"}

In Section~$6$, Lin introduces the idea of a co-nondeterministic oracle Turing machine, which we have summarized for clarity.
\begin{definition}[{\cite[Definition 6.2]{lin2024np}}]
    A co-nondeterministic oracle Turing machine $M$ is a nondeterministic Turing machine that has a special read-write tape, called an oracle tape, and three special states: $q_\text{query}$, $q_\text{yes}$, $q_\text{no}$. When constructing $M$, we specify some language $X\subseteq \Sigma ^\ast$ to act as the oracle language of $M$. Whenever $M$ enters the state $q_\text{query}$, the machine reads the contents of the oracle tape $w$, then moves into either $q_\text{yes}$ if $w\in X$ or into $q_\text{no}$ if $w \notin X$ in one computation step. We denote the co-nondeterministic oracle Turing machine $M$ on input $x$ with an oracle to the language $X$ as $M^X(x)$. $M^X$ is said to accept an input $x$ if all computation paths on input $x$ lead to an accepting state.
\end{definition}
Then Lin proposes three self-proclaimed ``rational assumptions" as follows:
\begin{enumerate}
    \item The polynomial-time co-nondeterministic oracle Turing machine can be encoded to a string over $\{ 0, 1 \}$,
    \item There are universal nondeterministic oracle Turing machines that can simulate any other co-nondeterministic oracle Turing machine, and
    \item The simulation can be done in time $O(T(n) \log T(n))$, where $T(n)$ is the time complexity of the simulated co-nondeterministic oracle Turing machine.
\end{enumerate}

Lin uses these assumptions for his Theorem~$2$, as shown below:

\begin{theorem}[{\cite[Theorem 2]{lin2024np}}]
    Under some rational assumptions, and if $\np^A = \conp^A$, then the set $\conp^A$ of all polynomial-time co-nondeterministic oracle Turing machine[s] with oracle $A$ is not enumerable. Thereby, the ordinary diagonalization techniques (lazy-diagonalization) will generally not apply to the relativized versions of the $\np$ versus $\conp$ problem.
\end{theorem}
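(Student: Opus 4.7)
The plan is to relativize Lin's proof of Theorem~1 to the oracle $A$, treating Theorem~2 as a proof by contradiction: I would assume for contradiction that the set of all polynomial-time co-nondeterministic oracle Turing machines with oracle $A$ \emph{is} enumerable as $M_1^A, M_2^A, \ldots$, and I would derive a contradiction with the hypothesis $\np^A = \conp^A$.

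First, I would invoke the three rational assumptions to build a relativized analogue $D^A$ of Lin's universal machine $D$ from Section~\ref{s:intro-argument}. The first assumption lets me treat each $M_i^A$ as a binary string and extract its parameters, as in step~1 of Lin's construction. The second assumption gives me a universal oracle machine capable of carrying out the simulations of step~4 against the enumerated list. The third assumption bounds the simulation overhead so that the entire lazy-diagonalization schedule still fits inside polynomial time on the tape of $D^A$, with each oracle query routed through to $A$. The language $L_d^A$ would then be defined to be $L(D^A)$.

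Second, I would appeal to the relativized versions of Lin's Theorems~4.1 and~5.3. The diagonalization step~4b yields $L_d^A \neq L(M_i^A)$ for every $i$ in the enumeration, so $L_d^A \notin \conp^A$; and the union-of-time-classes argument of Theorem~5.3 purports to give $L_d^A \in \np^A$ because $D^A$ is itself a polynomial-time nondeterministic oracle machine with oracle $A$. Together with the hypothesis $\np^A = \conp^A$, this would yield $L_d^A \in \conp^A$ and $L_d^A \notin \conp^A$ simultaneously, which is the sought-after contradiction and forces the enumeration not to exist.

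The main obstacle, and exactly the point where I expect this plan to break, is the relativized version of Theorem~5.3: as we show in Section~\ref{s:main-claim}, the unrelativized argument that $L_d \in \np$ silently uses closure of $\np$ under infinite unions, which fails even in the unrelativized world. Moreover, the construction of our Theorem~\ref{t:disprove-main-claim} relativizes without modification, so one would expect to prove $L_d^A \notin \np^A$ whenever the enumeration exists and $\np^A = \conp^A$, meaning the very intermediate step on which this plan relies is in fact \emph{false}. Thus the natural proof route inherits the same unsoundness as Lin's Theorem~1, and Theorem~2 is not established by this strategy even granting the three rational assumptions.
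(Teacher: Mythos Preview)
Your proposal matches the paper's treatment essentially point for point: you lay out Lin's contradiction argument by relativizing the construction of $D$ and $L_d$ to the oracle $A$ under the three rational assumptions, and then you correctly identify that the step $L_d^A \in \np^A$ inherits the infinite-union flaw of Lin's Theorem~5.3, so the argument is not sound. The paper's own discussion proceeds identically, summarizing Lin's relativized contradiction and then observing that because the proof of $L_d \in \np$ is flawed, the same logic cannot establish $L_d^A \in \np^A$; your added remark that our Theorem~\ref{t:disprove-main-claim} relativizes is a natural strengthening the paper does not state explicitly but is fully consistent with its critique.
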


Lin justifies this by contradiction. He argues that if the set $\conp^A$ of all polynomial-time co-nondeterministic oracle Turing machines is enumerable, then there exists a language $L_d^A$, constructed similarly to the language $L_d$ in Section 4 of his paper, such that $L_d^A \in \np^A$, but $L_d^A \notin \conp^A$. He states that this follows from the same logic he uses to prove that $L_d \in \np$ and $L_d \notin \conp$. Therefore, it follows that $\np^A \neq \conp^A$, which contradicts the assumption that $\np^A = \conp^A$. Thus, the set $\conp^A$ of all polynomial-time co-nondeterministic oracle Turing machines with oracle $A$ is not enumerable.

However, since in our Theorem~\ref{t:disprove-main-claim} we showed that Lin's proof that $L_d \in \np$ is flawed, the logic he uses in it cannot be used to show $L_d^A \in \np^A$. Thus, Lin's conclusion about the relativized version of the $\np$ versus $\conp$ problem is not validly established by his paper.

\subsubsection{Analysis of Lin's ``Rich Structure of coNP"}

In Section~$7$, Lin discusses what he calls the ``rich structure of $\conp$." He mentions Ladner's theorem, which states that if $\p \neq \np$, then there exist $\np$-intermediate problems, which are in $\np$ but are neither in $\p$ nor $\np$-complete \cite{lad:j:np-incomplete}. Then, based on his result that $\np \neq \conp$ in Theorem~$1$, Lin makes a similar claim regarding $\conp$ in Theorem~$6$.

\begin{theorem}[{\cite[Theorem 6]{lin2024np}}]
There are $\conp$-intermediate languages, i.e., there exists a language $L \in \conp$ that is neither in $\p$ nor in $\conp$-complete.
\end{theorem}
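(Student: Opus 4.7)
The plan is to follow a Ladner-style diagonalization~\cite{lad:j:np-incomplete} adapted to the $\conp$ setting. Ladner's original theorem establishes $\np$-intermediate languages under the hypothesis $\p \neq \np$, and an analogous argument yields $\conp$-intermediate languages under the hypothesis $\p \neq \conp$. Since $\p$ is closed under complement, $\p \neq \conp$ is equivalent to $\p \neq \np$, and this hypothesis would follow from $\np \neq \conp$.

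First, I would construct a candidate language $L$ by slow-growing padding of a known $\conp$-complete language such as TAUT\@. Specifically, define $L = \{x \# 1^{f(|x|)} : x \in \mathrm{TAUT}\}$, where $f \colon \naturalnumber \to \naturalnumber$ is a polynomial-time-computable ``slow'' function built by interleaving two diagonalization processes: one that defeats every polynomial-time deterministic machine $M_i$ on infinitely many input lengths (by making the padding collapse there so $L$ behaves like TAUT), and one that defeats every polynomial-time many-one reduction from TAUT to $L$ on infinitely many other input lengths (by making the padding explode there so $L$ behaves trivially). The function $f$ is defined stage by stage, alternating which obligation to satisfy, in the same recursive manner as in Ladner's original construction.

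Next, I would verify three properties of $L$: (i) $L \in \conp$, since computing $f(|x \# 1^{f(|x|)}|)$ and checking the padding can be done in polynomial time, after which membership reduces to deciding whether $x \in \mathrm{TAUT}$; (ii) $L \notin \p$, via the first diagonalization; and (iii) $L$ is not $\conp$-complete, via the second diagonalization, which precludes any polynomial-time many-one reduction from TAUT to $L$. Together, (ii) and (iii) certify that $L$ is $\conp$-intermediate, yielding Theorem~6.

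The main obstacle is that this construction fundamentally requires $\p \neq \conp$ as a premise. If $\p = \conp$, then every nontrivial $\conp$ language is in $\p$, and any such language is automatically $\conp$-complete under polynomial-time many-one reductions, collapsing the ``intermediate'' region entirely. Lin obtains $\p \neq \conp$ only as a consequence of $\np \neq \conp$ via his Theorem~1, but our Theorem~\ref{t:disprove-main-claim} shows that Lin's argument for $\np \neq \conp$ is flawed. Consequently, the hypothesis required for the Ladner-style construction is not validly established by Lin's paper, and Theorem~6 therefore is not proven within Lin's framework.
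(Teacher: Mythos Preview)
Your critical conclusion matches the paper's: the result hinges on $\p \neq \conp$ (equivalently $\p \neq \np$), which Lin obtains only through his flawed Theorem~1, so Theorem~6 is not validly established. On that point you and the paper agree.

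However, the proof strategy you sketch is not the one Lin actually uses (as reported in the paper). You propose redoing the Ladner construction directly on the $\conp$ side, padding TAUT with a slow function and interleaving two diagonalizations. Lin's route is considerably shorter: invoke Ladner's theorem as a black box to obtain an $\np$-intermediate language $L_{\text{inter}}$, set $L = \overline{L_{\text{inter}}}$, and then argue by closure properties that $L \in \p$ would force $L_{\text{inter}} \in \p$ and that $L$ being $\conp$-complete would force $L_{\text{inter}}$ to be $\np$-complete, each a contradiction. Your approach is self-contained and illustrates that the Ladner machinery transfers to $\conp$ without appealing to complementation, but it re-derives what can be obtained in two lines from the existing theorem. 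The paper's (i.e., Lin's) route buys simplicity; yours buys independence from the original Ladner statement at the cost of reproducing its entire construction.
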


To prove this, Lin uses the fact that since $\np \neq \conp$ according to his Theorem~$1$, it follows that $\p \neq \np$, and thus, Ladner's $\np$-intermediate language $L_{\text{inter}}$ exists \cite{lad:j:np-incomplete}. He then defines a language $L$ to be the complement of $L_{\text{inter}}$, so $L \in \conp$. He argues that if $L \in \p$, then $L_{\text{inter}} \in \p$ since $\p$ is closed under complement \cite{sip:b:introduction-third-edition}, and if $L$ is $\conp$-complete, then $L_{\text{inter}}$ must be $\np$-complete \cite{hop-ull:b:automata}. Both of these cases contradict the definition of $L_{\text{inter}}$, so $L \not\in \p$ and $L$ is not $\conp$-complete, meaning $L$ must be $\conp$-intermediate.

However, this proof of the existence of such a $\conp$-intermediate language $L$ relies on Lin's Theorem~$1$. Thus, while we believe his conclusion that a $\conp$-intermediate language exists does indeed follow from $\np \neq \conp$, such a conclusion cannot be drawn since Lin has not demonstrated that $\np \neq \conp$.

\subsubsection{Analysis of Lin's ``Frege Systems"}
In Section~$8$, Lin introduces an application of his results to Frege proof systems, which are proof systems for propositional logic.

A proof system for a language $\mathcal{L} \subseteq \Sigma^{*}$ is a function $f : \Sigma_{1}^{*} \to \mathcal{L}$ for some alphabet $\Sigma_{1}$ that is polynomial-time computable and onto \cite{coo-rec:j:proof-systems}. Informally, the proof system defines what is considered to be a valid proof of a formula in $\mathcal{L}$, so that a formula has a valid proof if and only if it is in $\mathcal{L}$, and there is a polynomial-time algorithm for determining whether a given proof is valid for a given formula in $\mathcal{L}$. The proof system is considered to be polynomially bounded if there is a polynomial $p(n)$ such that the length of a proof $x \in \Sigma_{1}^{*}$ for any formula $y \in \mathcal{L}$ is bounded by $p(n)$ such that $|x| \leq p(|y|)$, where $|x|$ and $|y|$ are the number of symbols in $x$ and $y$, respectively \cite{coo-rec:j:proof-systems}.

Then, a Frege proof system is a proof system for propositional logic based on the language $\mathcal{L}$ of well-formed formulas constructed from a countable set of propositional variables and a finite propositionally complete set of connectives, a finite set of axiom schemes, and the inference rule of Modus Ponens. There exists such a Frege proof system for \rm{TAUT}, the language of all tautologies \cite{coo-rec:j:proof-systems}.

Lin cites Theorem~$8.1$ from \cite{coo-rec:j:proof-systems}, which states that any Frege proof system $f_1: \Sigma_1^* \to \mathcal{L}$ can $p$-simulate any other Frege proof system $f_2: \Sigma_2^* \to \mathcal{L}$, meaning there is a polynomial-time computable function $g: \Sigma_2^* \to \Sigma_1^*$ such that $f_1(g(x)) = f_2(x)$ for all $x \in \Sigma_2^*$. Thus, one Frege proof system is polynomially bounded if and only if all Frege proof systems are. Lin also cites Theorem~$8.2$ from \cite{coo-rec:j:proof-systems}, which states $\np$ is closed under complement if and only if \rm{TAUT} is in $\np$.

Using Theorems~$8.1$ and $8.2$ of $\cite{coo-rec:j:proof-systems}$, along with his Theorem~$1$, Lin claims to prove his Theorem~$8$, which is as follows:

\begin{theorem}[{\cite[Theorem 8]{lin2024np}}]
There exists no polynomial $p(n)$ such that for all $\psi \in \rm{TAUT}$, there is a Frege proof of $\psi$ of length at most $p(|\psi|)$. In other words, no Frege proof systems of $\psi \in \rm{TAUT}$ can be polynomially bounded.
\end{theorem}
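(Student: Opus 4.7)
The plan is to argue by contradiction in the standard Cook--Reckhow style, with Lin's Theorem~1 as the crucial non-trivial input. First, I would suppose toward contradiction that there exists some Frege proof system $f_1 : \Sigma_1^* \to \mathcal{L}$ that is polynomially bounded, i.e., that there is a polynomial $p$ such that every $\psi \in \rm{TAUT}$ admits an $f_1$-proof of length at most $p(|\psi|)$.

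Next, I would invoke Cook and Reckhow's Theorem~8.1 (which Lin cites): since any Frege proof system $p$-simulates any other, the existence of a single polynomially bounded Frege proof system implies that \emph{every} Frege proof system is polynomially bounded. Then, fixing any Frege proof system $f$ for $\rm{TAUT}$, I would place $\rm{TAUT}$ in $\np$ via the standard ``guess and verify'' construction: on input $\psi$, nondeterministically guess a candidate string of length at most $p(|\psi|)$ and deterministically check in polynomial time whether it is a valid $f$-proof of $\psi$. By Cook and Reckhow's Theorem~8.2 (also cited by Lin), $\rm{TAUT} \in \np$ holds if and only if $\np$ is closed under complement, i.e., $\np = \conp$. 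Invoking Lin's Theorem~1 (that $\np \neq \conp$) then yields the desired contradiction.

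The main obstacle, of course, is precisely this final invocation of Lin's Theorem~1. As we have shown in our Theorem~\ref{t:disprove-main-claim}, Theorem~1 is not validly established by Lin's paper; in fact $L_d \notin \np$, directly contradicting it. Since every step prior to the appeal to Theorem~1 is the standard Cook--Reckhow argument and is sound, the failure of Theorem~1 is exactly the point at which Lin's proof of Theorem~8 breaks down. We therefore conclude that, while the statement of Theorem~8 remains a prominent open problem in propositional proof complexity, it is not validly established by Lin's paper.
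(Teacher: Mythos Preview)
Your proposal is correct and mirrors the paper's treatment: both reconstruct Lin's argument as the standard Cook--Reckhow guess-and-verify reduction placing $\rm{TAUT}$ in $\np$, derive $\np = \conp$, and then identify the appeal to Lin's Theorem~1 as the sole unsound step. The only cosmetic differences are that the paper invokes the $\conp$-completeness of $\rm{TAUT}$ directly rather than citing Theorem~8.2, and it places the $p$-simulation remark at the end rather than the beginning; neither changes the substance.
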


Lin proves this by contradiction, stating that if there were a polynomially bounded proof system for \rm{TAUT} such that for some polynomial $p(n)$, for every formula $\phi \in \rm{TAUT}$, there exists a proof of length at most $p(|\phi|)$, then there would exist a nondeterministic Turing machine $M$ which guesses a proof of length at most $|\phi|$ and verifies it in polynomial time, thus implying $\rm{TAUT} \in \np$. Since \rm{TAUT} is $\conp$-complete (see~\cite[pg.~57]{aro-bar:computational-complexity}), this would imply $\np = \conp$, which contradicts Lin's Theorem~$1$. Thus, Lin concludes that no such polynomially-bounded proof system for \rm{TAUT} can exist. Additionally, since any Frege proof system can $p$-simulate another, then no Frege proof system can be polynomially bounded.

However, since our Theorem~\ref{t:disprove-main-claim} shows that Lin's Theorem~$1$ does not necessarily hold, it does not follow that $\np = \conp$ is false. Therefore, the conclusion that no polynomially-bounded proof system for \rm{TAUT} exists cannot be drawn.

\section{Conclusion}
In this paper, we have demonstrated that Lin's Theorem~$5.3$~\cite{lin2024np}, which implies that $\np \neq \conp$, is not established by Lin's paper. In particular, we have shown that Lin's proof of Theorem~$5.3$~\cite{lin2024np} relies both on his incorrect belief that the infinite union of $\np$ languages must be in $\np$ and on a different claim that also can be disproven. Thus the validity of Lin's Theorem~$1$ cannot be established as it relies on his Theorem~$5.3$. We have also shown that, as a result, the subsequent claims Lin makes also cannot be verified due to their reliance on his main claim in Theorem~$1$~\cite{lin2024np}.

\paragraph{Acknowledgments}
We would like to thank Lane~A.~Hemaspaandra, Harry~Liuson, \mbox{Isabel~Humphreys}, Matthew~Iceland, Dylan~McKellips, and Leo~Sciortino for their helpful comments on prior drafts. The authors are responsible for any remaining errors.

\bibliographystyle{alpha}
\bibliography{main}

\clearpage
\appendix
\section{Additional Proofs}
\label{appendix:extra}

Included below are the proofs of Lemmas~\ref{l:positive-increasing-lemma} and~\ref{l:diff-by-2-lemma} that are introduced in Section~\ref{s:main-claim}.

\begin{lemma}
    For all positive integers $j$, $f(j)$ and $f(j + 1)$ are even positive integers such that $f(j) < f(j + 1)$.
\end{lemma}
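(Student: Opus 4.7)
The plan is to prove the lemma by straightforward induction on $j$, exploiting the recursive definition of $f$.

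For the base case $j = 1$, I would directly unpack the recursion. By the definition, $f(1) = 2$ (using the $j = 0$ case), and $f(2) = 2^{f(1)^k} = 2^{2^k}$. Since the time exponent $k$ satisfies $k \geq 1$, we have $2^k \geq 2$, so $f(2) \geq 2^2 = 4$. Thus both $f(1)$ and $f(2)$ are even positive integers, and clearly $f(1) = 2 < 4 \leq f(2)$.

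For the inductive step, assume that $f(j)$ and $f(j+1)$ are even positive integers with $f(j) < f(j+1)$; I would then show the same for $f(j+1)$ and $f(j+2)$. The first half, that $f(j+1)$ is an even positive integer, is free from the inductive hypothesis. For $f(j+2) = 2^{f(j+1)^k}$, note that $f(j+1) \geq 2$ is a positive integer and $k \geq 1$, so $f(j+1)^k$ is a positive integer, which makes $f(j+2)$ a positive integer power of $2$ and hence even and positive. For the strict inequality $f(j+1) < f(j+2)$, I would invoke the elementary fact that $2^n > n$ for every positive integer $n$ (itself an easy induction) and apply it to $n = f(j+1)^k$, which satisfies $f(j+1)^k \geq f(j+1) \geq 2$ since $k \geq 1$. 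Chaining gives $f(j+2) = 2^{f(j+1)^k} > f(j+1)^k \geq f(j+1)$, as required.

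I do not anticipate any real obstacle here; the only subtlety is making explicit the assumption that $k \geq 1$, which is implicit in Lin's setup since $k$ is the exponent of a polynomial time bound for a nondeterministic Turing machine. With that pinned down, the induction goes through mechanically, and the auxiliary inequality $2^n > n$ can be dispatched in a single sentence or relegated to a footnote.
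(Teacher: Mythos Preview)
Your proof is correct and follows essentially the same inductive approach as the paper. The only minor variation is in the inductive step: you obtain $f(j+1) < f(j+2)$ directly from the elementary bound $2^{n} > n$ applied with $n = f(j+1)^k$, whereas the paper instead appeals to the monotonicity of $x \mapsto 2^{x^{k}}$ together with the inductive hypothesis $f(j) < f(j+1)$; both arguments are equally valid and equally short.
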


\begin{proof}
We proceed by induction over $j$.
First, fix some $k \in \naturalnumberpositive$ and consider $j=0$. Since $f(j+1) = 2^{f(j)^k}$ for some $k \geq 1$, clearly $f(0) = 2$ and $f(1) = 2^{f(0)^k}=2^{2^k}$, which are both even integers. It follows that since $2 < 2^{2^k}$, we must have $f(0) < f(1)$. Next, assume this holds for $j = 0, 1, \dots, n-1$. By definition $f(n) = 2 ^ {f(n-1) ^ k}$ and $f(n+1) = 2^{f(n)^k}$. Since $f(n-1) < f(n)$ by assumption, it must be the case that $2^{f(n-1)^k} < 2^{f(n)^k}$ implies $ f(n) < f(n+1)$. Furthermore, these terms must be even integers, as they are equivalent to two raised to some exponent.
Thus, for all positive integers $j$, $f(j) < f(j+1)$.
\end{proof}

\begin{lemma}
    For all integers $j \geq 1$, $f(j) < f(j+1)-2$.
\end{lemma}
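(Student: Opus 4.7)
The plan is to combine a crude lower bound on $f(j+1)$ with a mild lower bound on $f(j)$, and then invoke the standard fact that $2^x$ eventually dominates $x+2$. Starting from the recurrence $f(j+1) = 2^{f(j)^k}$ (valid for $j \geq 1$), the inequality $k \geq 1$ together with $f(j) \geq 1$ gives $f(j)^k \geq f(j)$, and therefore $f(j+1) \geq 2^{f(j)}$.

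Next, by Lemma~\ref{l:positive-increasing-lemma} together with the base-case calculation from its proof, $f(1) = 2^{2^k} \geq 4$, and since the sequence $f(1), f(2), \ldots$ consists of strictly increasing even positive integers, $f(j) \geq 4$ for every $j \geq 1$.

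The remaining ingredient is the elementary inequality $2^x > x + 2$ for all integers $x \geq 4$, which can be established by a one-line induction on $x$: the base case gives $2^4 = 16 > 6$, and the step $2^{x+1} = 2 \cdot 2^x > 2(x+2) = 2x + 4 > (x+1) + 2$ holds for all $x \geq 1$. Specializing to $x = f(j)$ and combining with $f(j+1) \geq 2^{f(j)}$ yields $f(j+1) > f(j) + 2$, i.e., $f(j) < f(j+1) - 2$, for every $j \geq 1$.

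I do not anticipate any serious obstacle: the argument is essentially a triviality once the right lower bound on $f(j)$ is in hand. The only point requiring care is making sure $f(j) \geq 4$ already kicks in at $j = 1$ (so that the exponential-vs-linear inequality applies from the start), which is exactly the content delivered by Lemma~\ref{l:positive-increasing-lemma} and its base case.
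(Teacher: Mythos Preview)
Your argument is clean for $j \geq 2$, but it breaks at $j = 1$ because your claim that $f(1) = 2^{2^k} \geq 4$ is incorrect. By the definition of $f$ given in Section~\ref{s:intro-argument}, the base case reads $f(j+1) = 2$ when $j = 0$, so $f(1) = 2$, not $2^{2^k}$. (The base-case calculation you cite from the proof of Lemma~\ref{l:positive-increasing-lemma} is itself off by one in its indexing; the paper's own proof of the present lemma opens with ``Recall that $f(1) = 2$.'') With $f(1) = 2$, your key inequality $2^{f(1)} > f(1) + 2$ becomes $4 > 4$, which fails. Indeed the lemma as literally stated is false at $j = 1$ when $k = 1$, since then $f(2) = 2^{2} = 4$ and $f(1) < f(2) - 2$ would require $2 < 2$.

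The paper sidesteps this by taking $j = 2$ as its base case (consistent with how the lemma is actually invoked in Theorem~\ref{t:disprove-main-claim}, which only needs it for $j > 1$). For $j \geq 2$ your direct route---$f(j) \geq 4$ together with $f(j+1) \geq 2^{f(j)} > f(j) + 2$---is valid and tidier than the paper's induction, which detours through $f(n) > 2f(n-1)$ via the bound $(1 + f(n-2))^k \geq 1 + f(n-2)^k$. The fix is simply to start your argument at $j = 2$, where $f(2) = 2^{2^k} \geq 4$ genuinely holds, and it then goes through unchanged.
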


\begin{proof}
    We will proceed by induction over $j$.
    First, consider $j = 2$. 
    Recall that $f(1) = 2$ and $f(2) = 2^{f(1)^k}$ for some $k\in\naturalnumberpositive$. Thus, $f(2) \geq 2^{2^1} = 2^2 = 4$. Note that for the same $k$, $f(3) = 2^{f(2)^k}$, so $f(3) \geq 2^{4^1} = 2^4 = 16$. Therefore, since $4 < 14$, it follows that $f(2) < f(3)-2$.

    Next, assume that this holds for all $j = 2, 3, \dots, n-1$, we will show that it also holds for $n$.
     Recall that $f(n-1) = 2^{f(n-2)^k}$ and $f(n) = 2^{f(n-1)^k}$ for some $k \in \naturalnumberpositive$; in particular, note that $k \geq 1$. Thus, 
     \begin{equation}
         f(n) = 2^{f(n-1)^k} > 2^{(1+f(n-2))^k} \geq 2^{1^k + f(n-2)^k} = 2^{1^k} \cdot 2^{f(n-2)^k} = 2 f(n-1).
     \end{equation}
     However, it is clear that $f(n) \geq f(2) = 4$, as by Lemma~\ref{l:positive-increasing-lemma} $f(p+1) > f(p)$ for all $p \in \naturalnumberpositive$. It follows that $2 f(n) = f(n) + f(n) \geq f(n) + 4 > f(n) + 2$. Therefore, because $f(n) > 2 f(n-1)$ and $2 f(n-1) > f(n-1)+2$, it must be the case that $f(n) > f(n-1) + 2$, or $f(n) < f(n-1)-2$.
\end{proof}

\end{document}